\documentclass[10pt,conference]{IEEEtran}
\listfiles


\usepackage{amsmath,amssymb}
\usepackage{graphicx}
\usepackage{mathrsfs}
\usepackage{tikz}
\usepackage{booktabs}
\usepackage{enumerate}
\usepackage{psfrag}	
\usepackage{array}
\usepackage{multirow,hhline}
\usepackage{exscale}
\usepackage{color}
\usepackage{colortbl}
\usepackage{epsfig}
\usepackage{cite}
\usepackage{amsthm}

\bibliographystyle{IEEEtran}	

\newcommand{\inr}{\ensuremath{\mathsf{INR}}}
\newtheorem{theorem}{Theorem}
\newtheorem{lemma}{Lemma}

\newtheorem{definition}{Definition}

\begin{document}

\IEEEoverridecommandlockouts
\title{On Gaussian Multiple Access Channels with Interference: Achievable Rates and Upper Bounds}
\author{
\IEEEauthorblockN{Anas Chaaban, Aydin Sezgin}
\IEEEauthorblockA{Emmy-Noether Research Group on Wireless Networks\\
Institute of Telecommunications and Applied Information Theory\\
Ulm University, 89081 Ulm, Germany\\
Email: {anas.chaaban@uni-ulm.de, aydin.sezgin@uni-ulm.de}}
\and
\IEEEauthorblockN{Bernd Bandemer, Arogyaswami Paulraj}
\IEEEauthorblockA{Information Systems Lab\\
Stanford University, Packard Building,\\
350 Serra Mall, Stanford, CA 94305-9510, U.S.A.\\
Email: {bandemer@stanford.edu, apaulraj@stanford.edu}}
\thanks{%
The work of A. Chaaban and A. Sezgin is supported by the German Research Foundation, Deutsche Forschungsgemeinschaft (DFG), Germany, under grant SE 1697/3. The work of B. Bandemer is supported by an Eric and Illeana Benhamou Stanford Graduate Fellowship.%
}
}

\maketitle


\begin{abstract}
We study the interaction between two interfering Gaussian 2-user multiple access channels. The capacity region is characterized under mixed strong--extremely strong interference and individually very strong interference. Furthermore, the sum capacity is derived under a less restricting definition of very strong interference. Finally, a general upper bound on the sum capacity is provided, which is nearly tight for weak cross links.
\end{abstract}

\begin{IEEEkeywords}
Gaussian MAC, capacity, bounds, strong interference, very strong interference. 
\end{IEEEkeywords}

\section{Introduction}
A scenario where several transmitters each want to deliver a message to a common receiver is known as the multiple access channel (MAC). This setup models mobile users  that want to communicate with a central base station in a cellular network, for example. The MAC capacity region is known since 1971 \cite{Ahlswede,Liao}.

Another intensively studied model in information theory is the interference channel (IC). In this model, two transmit-receive pairs want to communicate while causing interference to each other. First proposed in 1978 \cite{Carleial}, the interference channel is still not fully understood. Its capacity  is known only in special cases, e.g., the very-strong interference regime \cite{Carleial_vsi}, the strong interference regime \cite{Sato}, and the noisy interference regime \cite{AnnapureddyVeeravalli,ShangKramerChen,MotahariKhandani} where only its sum-capacity is known. The sum-capacity of the interference channel with mixed interference was analyzed in \cite{WengTuninetti}.

The MAC and the IC are the two building blocks of the model considered here. We consider a setup that models two interfering 2-user MACs. This is a very practical situation which occurs frequently in cellular networks, where multiple mobile stations communicate with the base stations in their respective cells. The degrees of freedom of this setup were studied in \cite{CadambeJafarWang} and \cite{SuhTse}. We follow the naming in \cite{SuhTse} where the interfering multiple access channel was called the IMAC. We study this model and obtain new capacity results.

The capacity region of the IMAC is derived for a case of mixed strong--extremely strong interference. That is, when at each receiver, one interferer satisfies a strong interference condition and the other interferer satisfies an extremely strong interference condition. In this case, we show that the capacity region of the IMAC is bounded by the capacity region of the MAC formed by the two desired signals and the strong interfering signal at each receiver. This region is achievable by using Gaussian codes, decoding the extremely strong interferer first and subtracting it from the received signal, and then using the capacity achieving scheme for the resulting MAC to decode the remaining three signals. 

A condition for individually very strong interference is derived, and when this condition is satisfied, interference does not decrease the \emph{capacity region} of each of the interfering MACs, i.e., their interference free \emph{capacity region} can be achieved. Furthermore, another condition is derived (very strong combined interference), under which interference does not decrease the \emph{sum capacity} of each of the interfering MACs, i.e., their interference-free \emph{sum capacity} can be achieved. 

The simple scheme of treating interference as noise at each receiver gives a sum capacity lower bound for the IMAC. Using a genie aided approach similar to \cite{AnnapureddyVeeravalli}, we obtain a sum capacity upper bound which, although not coinciding with the lower bound of treating interference as noise, is fairly tight if the interference power is low.

\section{System Model}
\label{Model}

We consider the \emph{interfering MAC (IMAC)} channel depicted in Figure \ref{IMAC}, in which two 2-user multiple access channels use the same transmission resource and therefore interfere with each other. In this channel, transmitters 1 and 2 would like to send independent messages to receiver 1, while transmitters 3 and 4 have independent messages for receiver 2. Each of the two receiver nodes observes the combination of two desired and two interfering signals.

We constrain our attention to the symmetric real-valued memoryless Gaussian setting, where the channel inputs are real numbers, the observation noise is additive Gaussian, and at each time instance, the channel outputs are given by
\begin{align}
Y_1= X_1+ X_2+ h_{1}X_3+h_{2} X_4 + Z_1,\\
Y_2= h_1 X_1 + h_2 X_2 + X_3 + X_4 + Z_2.
\end{align}
Here, $h_1$ and $h_2$ denote the channel coefficients of the undesired cross-links. The noise terms $Z_1$, $Z_2$ are independent unit variance Gaussian  random variables. The channel inputs $X_i$ are controlled by the corresponding transmit node $i$, and are subject to the average power constraints 
\begin{align*}
	\mathbb{E}[X_1^2]&\leq P_1,  & 	\mathbb{E}[X_3^2]&\leq P_3 = P_1,  \\
	\mathbb{E}[X_2^2]&\leq P_2,  & 	\mathbb{E}[X_4^2]&\leq P_4 = P_2. 
\end{align*}
The channel is therefore completely symmetric with respect to exchanging the two multiple-access channels. It is parameterized by the tuple $(P_1,P_2,h_1,h_2)$.

\begin{figure}[h]
\centering
\includegraphics[width=0.8\columnwidth]{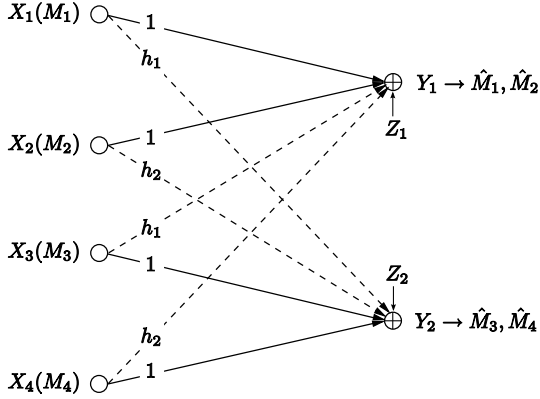}
\caption{Multiple access channel with interference. Users 1 and 2 want to communicate with receiver 1, while users 3 and 4 want to communicate with receiver 2.}
\label{IMAC}
\end{figure}

A $(n,2^{nR_1},2^{nR_2},2^{nR_3},2^{nR_4})$ code for the IMAC consists of message sets $\mathcal{M}_i=\{1,\dots,2^{nR_i}\}$, four encoding functions $f_i: \mathcal M_i \to \mathbb R^n$ (one for each transmitting node), and four decoding functions $g_i : \mathbb R^n \to \mathcal M_i$ (two for each receiving node). The probability of decoding error for the code $c$ is
\begin{align}
	P_e^{(n)} &= P (\hat M_i \neq M_i \text{ for some $i$ } ),
\end{align}
where the messages $M_i$ are uniformly and independently drawn from the message sets, and $\hat M_i$ are the detected messages at the receivers, resulting from applying the decoding functions $g_i$.

A rate tuple $(R_1,R_2,R_3,R_4)$ is achievable in the IMAC if there exists a sequence of $(n,2^{nR_1},2^{nR_2},2^{nR_3},2^{nR_4})$ codes such that $P_e^{(n)} \to 0$ as $n \to \infty$. The capacity region $\mathcal C$ of the IMAC is the closure of the set of all achievable rate tuples. The sum capacity is the largest achievable sum-rate 
\begin{align}
C_\Sigma=\max_{(R_1,R_2,R_3,R_4)\in\mathcal{C}} R_1+R_2+R_3+R_4.
\end{align}

\section{Main Results}
\label{MainResults}

Before we state the main results of this paper, we need the following definition for simplicity of exposition. Consider the MAC channels that are contained in the IMAC.
\begin{definition}
Let $M(\mathcal{S},j,N)$ denote the multiple access channel (MAC) from transmitters $i\in\mathcal{S}\subseteq\{1,2,3,4\}$ to receiver $j\in\{1,2\}$ with additive Gaussian noise of variance $N$. Let $\mathcal{C}^M(\mathcal{S},j,N)$ be the capacity region of this MAC.
\end{definition}

It is well-known~\cite{CoverThomas} that $\mathcal{C}^M(\mathcal{S},j,N) \subseteq \mathbb R_+^{\lvert \mathcal S \rvert}$ is specified by the inequalities
\begin{align*}
\sum_{i\in\mathcal{T}}R_i\leq\frac{1}{2}\log\left(1+\sum_{i\in\mathcal{T}}h_{ij}^2P_i/N\right), \quad \forall \mathcal{T}\subseteq\mathcal{S},
\end{align*}
where $h_{ij} \in \{1,h_1,h_2\}$ is the channel coefficient from the $i$th transmitter to the $j$th receiver.

When convenient, we abbreviate $M(\mathcal{S},j,1)$ as  $M(\mathcal{S},j)$ and $\mathcal{C}^M(\mathcal{S},j,1)$ as $\mathcal{C}^M(\mathcal{S},j)$.  The following lemma  will be useful in the proof of subsequent theorems.

\begin{lemma}
\label{SILemma}
The capacity of the IMAC $\mathcal{C}^{}$ is included in $\overline{\mathcal{C}}^{}$, i.e.$$\mathcal{C}^{}\subseteq\overline{\mathcal{C}}^{}$$
where $\overline{\mathcal{C}}^{}=$
\begin{align}
\left\{
\begin{array}{l}
R_i\geq0:\\
(R_1,R_2)\in\mathcal{C}^M(\{1,2\},1)\\
(R_3,R_4)\in\mathcal{C}^M(\{3,4\},2)\\
(R_1,R_2,R_3)\in\mathcal{C}^{M}(\{1,2,3\},1)  \text{ if } h_{1}^2\geq1\\
(R_1,R_2,R_4)\in\mathcal{C}^{M}(\{1,2,4\},1) \text{ if } h_{2}^2\geq1\\
(R_1,R_3,R_4)\in\mathcal{C}^{M}(\{1,3,4\},2) \text{ if } h_{1}^2\geq1\\
(R_2,R_3,R_4)\in\mathcal{C}^{M}(\{2,3,4\},2) \text{ if } h_{2}^2\geq1
\end{array}\right\}
\end{align}
\end{lemma}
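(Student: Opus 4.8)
\emph{Overall approach.} The plan is to prove each of the six region memberships by a genie-aided converse: for every listed MAC I would hand the relevant receiver extra messages so that its effective channel becomes exactly that Gaussian MAC, and then invoke the standard Gaussian-MAC converse. Since a genie can only enlarge the achievable region, any rate tuple achievable on the IMAC must lie in each resulting MAC region.

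\emph{Pairwise constraints (no condition).} To establish $(R_1,R_2)\in\mathcal{C}^M(\{1,2\},1)$, I would give receiver~1 the genie $(M_3,M_4)$. As $X_3^n,X_4^n$ are deterministic functions of these messages, the receiver forms $Y_1^n-h_1X_3^n-h_2X_4^n=X_1^n+X_2^n+Z_1^n$, which is precisely the output of $M(\{1,2\},1)$. The original code already decodes $(M_1,M_2)$ reliably and the genie only helps, so $(R_1,R_2)$ must satisfy the MAC converse inequalities, i.e.\ $(R_1,R_2)\in\mathcal{C}^M(\{1,2\},1)$. The constraint $(R_3,R_4)\in\mathcal{C}^M(\{3,4\},2)$ follows identically by swapping the two receivers.

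\emph{Three-user constraints (strong interference).} Here the conditions enter; I focus on $(R_1,R_2,R_3)\in\mathcal{C}^M(\{1,2,3\},1)$ under $h_1^2\geq1$, the other three being symmetric. I would give receiver~1 only the genie $M_4$, so that after subtracting $h_2X_4^n$ its effective observation is $\tilde Y_1^n=X_1^n+X_2^n+h_1X_3^n+Z_1^n$, exactly the output of $M(\{1,2,3\},1)$. Decoding $(M_1,M_2)$ is immediate from Fano, so it remains to show $H(M_3\mid M_1,M_2,M_4,Y_1^n)\leq n\epsilon_n$, i.e.\ that receiver~1 can additionally recover the interfering message $M_3$. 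The key step, and the main obstacle, is a stochastic-degradedness comparison transferring decodability of $M_3$ from receiver~2 (where it is desired) to receiver~1. Conditioned on $(M_1,M_2,M_4)$, receiver~1's residual observation is $h_1X_3^n+Z_1^n$ (equivalently $X_3^n$ through noise of variance $1/h_1^2$), while receiver~2's residual is $X_3^n+Z_2^n$ (noise variance~$1$). Because $h_1^2\geq1$ gives $1/h_1^2\leq1$, receiver~2's signal can be written as receiver~1's signal rescaled and further corrupted by independent Gaussian noise, so $M_3\to\tilde Y_1^n\to X_3^n+Z_2^n$ is a Markov chain given the genie.

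\emph{Conclusion.} The Markov chain yields $H(M_3\mid M_1,M_2,M_4,Y_1^n)\leq H(M_3\mid M_1,M_2,M_4,Y_2^n)\leq H(M_3\mid Y_2^n)\leq n\epsilon_n$, the last inequality being Fano at receiver~2. Combined with the Fano bound for $(M_1,M_2)$, this shows $(M_1,M_2,M_3)$ are jointly decodable on the channel $M(\{1,2,3\},1)$, and the standard Gaussian-MAC converse delivers the claimed membership. The remaining three constraints follow by relabeling the roles of $h_1$ and $h_2$ and exchanging the two receivers; I expect the degradedness argument, rather than any of the entropy bookkeeping, to be the only genuinely nontrivial ingredient.
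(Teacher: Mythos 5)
Your proof is correct and follows essentially the same route as the paper's: the pairwise constraints come from the interference-free reduction, and each three-user constraint comes from handing the receiver the other cross message (e.g.\ $M_4$) as genie information and transferring decodability of $M_3$ from receiver~2 via the degradedness implied by $h_1^2\geq 1$. Your write-up is in fact a more careful rendering of the paper's sketch, which merely asserts that receiver~1 ``can construct a less noisy version of $Y_2^n$''; your explicit conditioning on $(M_1,M_2,M_4)$ and the stochastic-degradedness/Fano chain makes precise exactly what that sketch leaves implicit.
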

\begin{proof}[Proof Sketch]
The bound on $(R_1,R_2)$ is trivial since $M(\{1,2\},1)$ is the interference-free version of the first MAC, and the presense of interference cannot improve the rates. Similarly $(R_3,R_4)\in\mathcal{C}^M(\{3,4\},2)$. For the other bounds, assume that $h_1^2 \geq 1$. If we give $M_4$ to the first receiver as genie information, it can construct a less noisy version of $Y_2^n$ from its own observation $Y_1^n$. Therefore, $M_3$ can then be reliably decoded from $Y_1^n$, i.e., $(R_1,R_2,R_3)\in\mathcal{C}^{M}(\{1,2,3\},1)$. The other bounds are obtained similarly.
\end{proof}

\subsection{Capacity with mixed strong--extremely strong interference}
Consider the following special case of the IMAC.

\begin{definition}
The IMAC has mixed strong--extremely strong interference MSES$(i,j)$ if for $i,j\in\{1,2\}$, $i\neq j$, we have
\begin{align}
\label{MSVS1}
h_{j}^2&\geq 1+P_1+P_2+h_{i}^2P_i,\\
\label{MSVS4}
h_{i}^2&\geq1.
\end{align}
where $h_i$ represents the strong interference channel, and $h_j$ the extremely strong one.
\end{definition}

The capacity region then follows from this theorem.

\begin{theorem}
\label{VSI-SI-Region}
The capacity region of the IMAC with mixed strong--extremely strong interference MSES$(1,2)$ is given by 
\begin{equation}
\mathcal{C}^{}=\left\{
\begin{array}{l}
(R_1,R_2,R_3,R_4)\in\mathbb{R}^4_+:\\
(R_1,R_2,R_3)\in\mathcal{C}^M(\{1,2,3\},1)\\
(R_1,R_3,R_4)\in\mathcal{C}^M(\{1,3,4\},2)
\end{array}\right\}
\end{equation}
\end{theorem}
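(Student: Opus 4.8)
The plan is to establish the two inclusions $\mathcal{C}\subseteq\overline{\mathcal{C}}'$ and $\overline{\mathcal{C}}'\subseteq\mathcal{C}$, where $\overline{\mathcal{C}}'$ denotes the region in the statement (the two three-user MAC constraints). The converse is immediate from Lemma~\ref{SILemma}: under MSES$(1,2)$ we have $h_1^2\geq1$ by \eqref{MSVS4}, so the two bounds of Lemma~\ref{SILemma} that are gated by $h_1^2\geq1$, namely $(R_1,R_2,R_3)\in\mathcal{C}^M(\{1,2,3\},1)$ and $(R_1,R_3,R_4)\in\mathcal{C}^M(\{1,3,4\},2)$, hold for every achievable tuple. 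These are exactly the inequalities defining $\overline{\mathcal{C}}'$, and since $\overline{\mathcal{C}}$ only adds further (hence more restrictive) bounds, $\mathcal{C}\subseteq\overline{\mathcal{C}}\subseteq\overline{\mathcal{C}}'$.

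For the reverse inclusion I would use i.i.d.\ Gaussian codebooks at full power together with successive decoding and interference cancellation. At receiver~$1$ the decoder first decodes $M_4$ (the interferer arriving through the extremely strong link $h_2$) while treating $X_1+X_2+h_1X_3+Z_1$ as noise, subtracts $h_2X_4$, and then decodes $(M_1,M_2,M_3)$ from the residual channel $X_1+X_2+h_1X_3+Z_1$, which is precisely $M(\{1,2,3\},1)$; this second stage succeeds for any $(R_1,R_2,R_3)\in\mathcal{C}^M(\{1,2,3\},1)$ by the standard Gaussian MAC result. Receiver~$2$ is handled symmetrically: decode $M_2$ first while treating $h_1X_1+X_3+X_4+Z_2$ as noise, subtract $h_2X_2$, and decode $(M_1,M_3,M_4)$ from the residual MAC $M(\{1,3,4\},2)$.

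The step that invokes the extremely strong condition — and the part I expect to be the main obstacle — is showing that stripping off the extremely strong interferer in the first stage never constrains the region. At receiver~$1$ this stage supports $M_4$ up to rate $\tfrac12\log\bigl(1+h_2^2P_4/(1+P_1+P_2+h_1^2P_1)\bigr)$, whereas any tuple in $\overline{\mathcal{C}}'$ already satisfies $R_4\leq\tfrac12\log(1+P_4)$ through the $\mathcal{C}^M(\{1,3,4\},2)$ constraint (the coefficient from transmitter~$4$ to receiver~$2$ is $1$). These two are compatible exactly when $1+P_1+P_2+h_1^2P_1\leq h_2^2$, i.e.\ condition \eqref{MSVS1}. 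Using $P_3=P_1$ and $P_4=P_2$, the analogous check at receiver~$2$ (now $R_2\leq\tfrac12\log(1+P_2)$ from the $\{1,2,3\}$ constraint) reduces to the same inequality. Hence \eqref{MSVS1} is precisely the threshold that lets both receivers remove their extremely strong interferer without loss, after which only the two residual MAC constraints remain and the scheme achieves every point of $\overline{\mathcal{C}}'$. I would finally remark, as a consistency check, that \eqref{MSVS1} also renders the remaining bounds of $\overline{\mathcal{C}}$ (those gated by $h_2^2\geq1$, together with the two pairwise bounds) redundant given the two bounds of $\overline{\mathcal{C}}'$, confirming that $\overline{\mathcal{C}}'$ is tight.
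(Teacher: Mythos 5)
Your proof is correct and follows essentially the same route as the paper: the converse is read off from Lemma~\ref{SILemma} (using $h_1^2\geq 1$ from \eqref{MSVS4}), and achievability uses Gaussian codebooks with successive decoding --- strip the extremely strong interferer first, then decode the residual three-user MAC --- with condition \eqref{MSVS1} guaranteeing, exactly as you verify, that the first stage never binds since the region itself caps $R_4\leq\tfrac12\log(1+P_4)$ and $R_2\leq\tfrac12\log(1+P_2)$. Your write-up is in fact slightly more careful than the paper's sketch, which contains a typo (it says receiver~2 decodes $X_1^n$ first, whereas under MSES$(1,2)$ the extremely strong interferer at receiver~2 arrives via $h_2$ and is $X_2^n$, precisely the signal you strip there).
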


Note that due to  symmetry in the channel, the sets $\mathcal{C}^M(\{1,2,3\},1)$ and $\mathcal{C}^M(\{1,3,4\},2)$ are in fact equal. A similar result holds for the other case MSES$(2,1)$.

\begin{proof}[Proof Sketch]
The outer bound is obtained from Lemma \ref{SILemma}. The inner bound is obtained using the following scheme. Receivers decode the extremely strong interfering signal first while treating all other signals as noise. That is, $X_4^n$ is decoded first at receiver 1 while treating $X_1^n$, $X_2^n$, and $X_3^n$ as noise, and  $X_1^n$ is decoded first at receiver 2 while treating $X_2^n$, $X_3^n$, and $X_4^n$ as noise. This is reliably possible due to condition~\eqref{MSVS1}. Then the receivers remove the contribution of the decoded interference from their received signal, and decode the remaining signals in a MAC fashion, achieving the outer bound.
\end{proof}

\subsection{Capacity with individually very strong interference}
Inspired by the interference channel with very strong interference \cite{Carleial_vsi}, where the presence of cross-links does not impair the capacity region, we now consider the following special case of the IMAC.

\begin{definition}
The IMAC has individually very strong interference if
\begin{align}
\label{VSI-1}
h_{1}^2,h_2^2&\geq 1+P_1+P_2,
\end{align}
\end{definition}

We call this \emph{individually very strong}, since both cross-link gains have to satisfy separate conditions.

\begin{theorem}
\label{VSI-Region}
The capacity region of  the IMAC with individually very strong interference is 
\begin{equation}
\mathcal{C}^{}=\left\{
\begin{array}{l}
R_i\geq0:\\
(R_1,R_2)\in\mathcal{C}^M(\{1,2\},1),\\
(R_3,R_4)\in\mathcal{C}^M(\{3,4\},2),
\end{array}
\right\}
\end{equation}
\end{theorem}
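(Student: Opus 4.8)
The plan is to establish the two inclusions $\mathcal{C}\subseteq\mathcal{C}_{\mathrm{claim}}$ (converse) and $\mathcal{C}_{\mathrm{claim}}\subseteq\mathcal{C}$ (achievability) separately, where $\mathcal{C}_{\mathrm{claim}}$ denotes the region on the right-hand side of the theorem.

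For the converse I would invoke Lemma~\ref{SILemma} directly. The region $\overline{\mathcal{C}}$ always contains the two constraints $(R_1,R_2)\in\mathcal{C}^M(\{1,2\},1)$ and $(R_3,R_4)\in\mathcal{C}^M(\{3,4\},2)$, irrespective of the values of $h_1,h_2$. Since $\overline{\mathcal{C}}$ is carved out by these two constraints together with possibly additional ones, it is contained in the region defined by these two constraints alone, which is exactly $\mathcal{C}_{\mathrm{claim}}$. Hence $\mathcal{C}\subseteq\overline{\mathcal{C}}\subseteq\mathcal{C}_{\mathrm{claim}}$, and the converse follows with no further work; the extra three-user constraints available under $h_1^2,h_2^2\geq1$ are simply not needed here.

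For achievability I would let all transmitters use i.i.d.\ Gaussian codebooks at full power and apply a successive decode-and-cancel strategy at each receiver. Fix a tuple with $(R_1,R_2)\in\mathcal{C}^M(\{1,2\},1)$ and $(R_3,R_4)\in\mathcal{C}^M(\{3,4\},2)$. Receiver~1 first decodes the interfering pair $(M_3,M_4)$ from $Y_1^n$ while treating $X_1^n$ and $X_2^n$ as additional Gaussian noise of total power $P_1+P_2$; it then subtracts the reconstructed interference $h_1X_3^n+h_2X_4^n$ and decodes its desired pair $(M_1,M_2)$ from the resulting clean channel $X_1^n+X_2^n+Z_1^n$. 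Receiver~2 proceeds symmetrically. The second step succeeds because $(R_1,R_2)\in\mathcal{C}^M(\{1,2\},1)$ by hypothesis, and a union bound over the two decoding steps and the two receivers shows that the overall error probability vanishes.

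The crux, and the only nontrivial computation, is the first decoding step: receiver~1 must decode $(M_3,M_4)$ at rates $(R_3,R_4)$ even though these messages were tailored to receiver~2. This reduces to the containment
$$\mathcal{C}^M(\{3,4\},2,1)\subseteq\mathcal{C}^M(\{3,4\},1,1+P_1+P_2),$$
i.e.\ the MAC seen by receiver~1 for the interference (gains $h_1,h_2$, effective noise $1+P_1+P_2$) must dominate the interference-free MAC at receiver~2 (unit gains, unit noise). I would verify this inequality by inequality: after cancelling the common logarithm, the $R_3$ bound reduces to $h_1^2\geq1+P_1+P_2$, the $R_4$ bound to $h_2^2\geq1+P_1+P_2$, and the sum-rate bound $(P_1+P_2)(1+P_1+P_2)\leq h_1^2P_1+h_2^2P_2$ follows from both individual conditions. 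This is precisely where the individually-very-strong hypothesis~\eqref{VSI-1} is consumed. Once the containment holds, $(R_3,R_4)\in\mathcal{C}^M(\{3,4\},2,1)$ guarantees reliable interference decoding at receiver~1, completing the achievability and hence the proof.
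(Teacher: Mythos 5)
Your proposal is correct and follows essentially the same route as the paper: the converse via Lemma~\ref{SILemma} (using only its two unconditional constraints), and achievability by Gaussian codebooks with successive decoding of the interference pair first, reduced to the containment $\mathcal{C}^M(\{3,4\},2)\subseteq\mathcal{C}^M(\{3,4\},1,1+P_1+P_2)$ that condition~\eqref{VSI-1} guarantees. Your inequality-by-inequality verification of that containment is a welcome elaboration of a step the paper merely asserts, but it is the same argument.
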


As in the case of the interference channel, the capacity region is not impaired by the presence of cross-links, i.e., the interference-free capacity is achieved. Note that because of symmetry in the channel, the sets $\mathcal{C}^M(\{1,2\},1)$ and $\mathcal{C}^M(\{3,4\},2)$ are in fact equal.

\begin{proof}[Proof Sketch]
The outer bound is given by Lemma \ref{SILemma}. This outer bound is achievable as follows. Transmitters use Gaussian codebooks. Each receiver decodes both interfering signals first while treating both desired signals as noise. Reliable decoding of interference is possible if $(R_1,R_2)\in\mathcal{C}^M(\{1,2\},2,1+P_1+P_2)$ and $(R_3,R_4)\in\mathcal{C}^M(\{3,4\},1,1+P_1+P_2)$. Then, each receiver subtracts the contribution of the interfering signals, and decodes the desired signals interference free. Reliable decoding of the desired signals is possible if $(R_1,R_2)\in\mathcal{C}^M(\{1,2\},1)$ and $(R_3,R_4)\in\mathcal{C}^M(\{3,4\},2)$. Now if condition (\ref{VSI-1}) holds, then $\mathcal{C}^M(\{1,2\},1)\subseteq\mathcal{C}^M(\{1,2\},2,1+P_1+P_2)$ and $\mathcal{C}^M(\{3,4\},2)\subseteq\mathcal{C}^M(\{3,4\},1,1+P_1+P_2)$ and hence the regions $\mathcal{C}^M(\{1,2\},1)$ and $\mathcal{C}^M(\{3,4\},2)$ are achievable.
\end{proof}

As shown in Figure \ref{MAC-Regs}, condition (\ref{VSI-1}) guarantees that the region $\mathcal{C}^M(\{3,4\},2)$ (solid blue) is completely contained in  $\mathcal{C}^M(\{3,4\},1,1+P_1+P_2)$ (dashed red). The intuition is that the first receiver can decode the messages from transmitters $3$ and $4$ even under the additional noise caused by the first two transmitters.

\begin{figure}
\centering
\psfragscanon
\psfrag{If}[l]{{$\mathcal{C}^M(\{3,4\},2)$}}
\psfrag{C1}[l]{{$\mathcal{C}^M_{IVS}(\{3,4\},1,1+P_1+P_2)$}}
\psfrag{C2}[l]{{$\mathcal{C}^M_{VSC}(\{3,4\},1,1+P_1+P_2)$}}
\psfrag{x}[l]{$R_3$}
\psfrag{y}[l]{$R_4$}
\includegraphics[width=0.9\columnwidth]{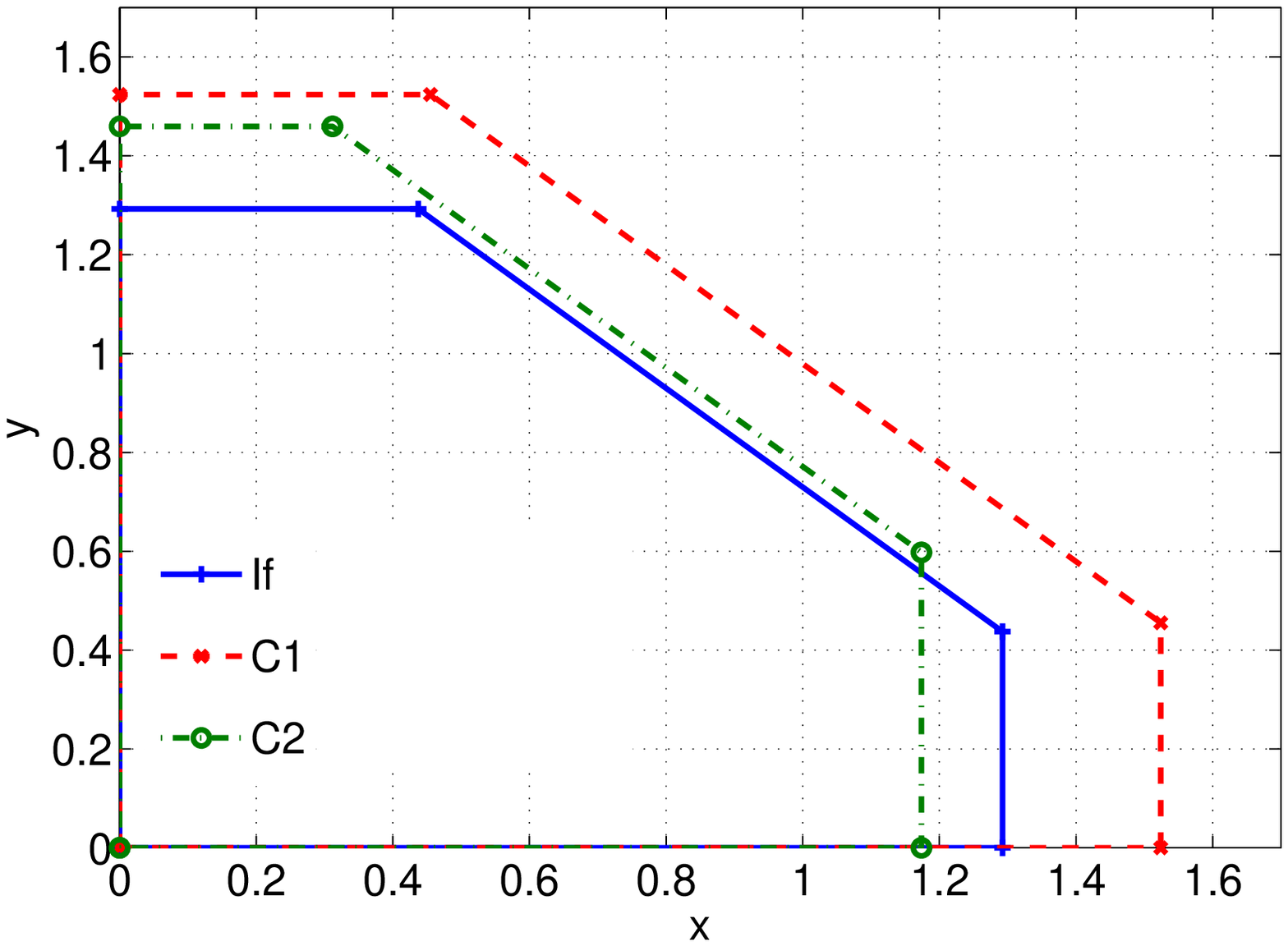}
\caption{The interference-free MAC $M(\{3,4\},2)$ as compared to $M(\{3,4\},1,1+P_1+P_2)$ under the scenarios of Theorem~\ref{VSI-Region} (individually very strong, IVS) and Theorem~\ref{VSI-Sum-Rate} (very strong combined, VSC).}
\label{MAC-Regs}
\end{figure}

\subsection{Sum capacity with very strong combined interference}
Now consider a weaker condition than~\eqref{VSI-1}.

\begin{definition}
The IMAC has  very strong combined interference if
\begin{align}
\label{VSI-5}
h_{1}^2P_1+h_{2}^2P_2&\geq (P_1+P_2)(1+P_1+P_2),
\end{align}
\end{definition}

We call this the very strong combined interference regime because the condition is on the sum of the interference powers at each receiver. It is clear that individually very strong interference implies very strong combined interference. The converse, however, does not hold.

This special case permits the following result.

\begin{theorem}
\label{VSI-Sum-Rate}
The sum capacity of the IMAC with  very strong combined interference is
\begin{align}
C_\Sigma&=\log(1+P_1+P_2).
\end{align}
\end{theorem}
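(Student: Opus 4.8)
The plan is to prove the two inequalities $C_\Sigma \le \log(1+P_1+P_2)$ and $C_\Sigma \ge \log(1+P_1+P_2)$ separately. For the converse I would simply invoke Lemma~\ref{SILemma}, which gives $(R_1,R_2)\in\mathcal{C}^M(\{1,2\},1)$ and $(R_3,R_4)\in\mathcal{C}^M(\{3,4\},2)$. The sum-rate facets of these two interference-free MACs read $R_1+R_2\le\frac12\log(1+P_1+P_2)$ and $R_3+R_4\le\frac12\log(1+P_1+P_2)$, and adding them yields $C_\Sigma\le\log(1+P_1+P_2)$. Note that this already forces any sum-optimal scheme to operate with \emph{both} pairs on their dominant faces simultaneously, so the achievability must attain $R_1+R_2=R_3+R_4=\frac12\log(1+P_1+P_2)$ at once.

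For achievability I would reuse the decode-interference-first idea of Theorem~\ref{VSI-Region}, but exploit that here only the \emph{sum} rate must be reached. Each transmitter uses an independent Gaussian codebook. At receiver~1 the combined interference $h_1X_3+h_2X_4$ has total power $h_1^2P_1+h_2^2P_2$; the receiver reconstructs and removes it (treating the desired $X_1,X_2$ as noise), and then decodes $X_1,X_2$ over the clean MAC $M(\{1,2\},1)$, which supports $R_1+R_2=\frac12\log(1+P_1+P_2)$. Receiver~2 acts symmetrically. The role of condition~\eqref{VSI-5} is exactly that the sum-rate constraint for reconstructing the interference,
\begin{equation*}
R_3+R_4\le\tfrac12\log\Bigl(1+\tfrac{h_1^2P_1+h_2^2P_2}{1+P_1+P_2}\Bigr),
\end{equation*}
has right-hand side at least $\frac12\log(1+P_1+P_2)$ precisely when $h_1^2P_1+h_2^2P_2\ge(P_1+P_2)(1+P_1+P_2)$. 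Hence the interference-reconstruction sum-rate and the desired-signal sum-rate both equal $\frac12\log(1+P_1+P_2)$, and the two contributions add to the target.

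The step I expect to be the main obstacle is turning ``the sum-rate values match'' into ``a single rate point is achievable.'' Reconstructing the interference by decoding $(X_3,X_4)$ as an ordinary MAC forces $(R_3,R_4)$ to lie not only on the sum-rate facet but also inside the \emph{individual} constraints of $\mathcal{C}^M(\{3,4\},1,1+P_1+P_2)$, while at receiver~2 the same pair must lie in $\mathcal{C}^M(\{3,4\},2)$; I would therefore have to verify that the dominant faces of the two pentagons in Figure~\ref{MAC-Regs} genuinely \emph{overlap}, and do not merely agree in their sum-rate value. I would attack this by parametrizing the dominant face by $R_3$ and checking the two corner inequalities, and—crucially for the case of an asymmetric or weak cross-link, where an individual facet of $\mathcal{C}^M(\{3,4\},1,1+P_1+P_2)$ can cut below the desired face—by reconstructing the combined interference as a single super-codeword of rate $R_3+R_4$, so that only the sum-rate constraint guaranteed by~\eqref{VSI-5} is invoked rather than the per-user constraints. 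Establishing that this combined reconstruction is accurate enough for exact cancellation (so that the clean MAC is really recovered) is the delicate point on which the whole argument turns, and is where I would spend most of the effort.
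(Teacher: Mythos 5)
Your converse is exactly the paper's: Lemma~\ref{SILemma} gives $(R_1,R_2)\in\mathcal{C}^M(\{1,2\},1)$ and $(R_3,R_4)\in\mathcal{C}^M(\{3,4\},2)$, and adding the two sum-rate facets yields $C_\Sigma\le\log(1+P_1+P_2)$; that part is fine. Your achievability also starts as the paper's does (decode both interferers first, subtract, then decode the desired pair over a clean MAC, so that $(R_3,R_4)$ must lie in $\mathcal{C}^M(\{3,4\},2)\cap\mathcal{C}^M(\{3,4\},1,1+P_1+P_2)$), and the obstacle you flag is genuine: condition~\eqref{VSI-5} only aligns the two \emph{sum-rate} facets and says nothing about the individual-rate facets of the second pentagon. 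Indeed, \eqref{VSI-5} is satisfied with $h_1^2P_1$ huge and $h_2^2P_2$ arbitrarily small (e.g.\ $P_1=P_2=1$, $h_1^2=100$, $h_2^2=0.01$), and then the facet $R_4\le\frac12\log\bigl(1+\frac{h_2^2P_2}{1+P_1+P_2}\bigr)\approx 0$ caps the sum rate of the intersection near $\frac12\log(1+P_1)$, strictly below $\frac12\log(1+P_1+P_2)$. Note that the paper's own proof sketch silently skips this point: it simply asserts that under~\eqref{VSI-5} the maximum sum rate of $\underline{\mathcal{C}}$ is $\log(1+P_1+P_2)$, which is true in the situation drawn in Fig.~\ref{MAC-Regs} but not throughout the region defined by~\eqref{VSI-5}.

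The genuine gap is in your proposed repair, which does not work. Reconstructing $h_1X_3^n+h_2X_4^n$ as a ``single super-codeword of rate $R_3+R_4$'' cannot evade the per-user constraints, because transmitters 3 and 4 encode \emph{independently}: the set of possible interference sequences is $\{h_1x_3^n(m_3)+h_2x_4^n(m_4)\}$, and (for independent random codebooks, almost surely) reconstructing the sum exactly is the same as decoding the pair $(m_3,m_4)$. The error events in which only $m_4$ is wrong must be resolved from signal differences of power proportional to $h_2^2P_2$ against effective noise $1+P_1+P_2$, which is precisely what produces the constraint $R_4\le\frac12\log\bigl(1+\frac{h_2^2P_2}{1+P_1+P_2}\bigr)$ you are trying to avoid; relabeling the pair as one codeword does not change this geometry, since the sum-codebook consists of $2^{nR_3}$ clusters of diameter $O(h_2)$ whose interior cannot be resolved at any rate bounded away from zero when $h_2$ is small (equivalently, no estimator can drive the residual cancellation error to zero when $R_4$ exceeds the capacity of the $X_4\!\to\!Y_1$ link). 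Meanwhile, hitting the target $\log(1+P_1+P_2)$ forces $R_3+R_4=\frac12\log(1+P_1+P_2)$ with $R_3\le\frac12\log(1+P_1)$ (both from the outer bound), hence $R_4\ge\frac12\log\frac{1+P_1+P_2}{1+P_1}>0$ independently of $h_2$. For small $h_2$ these requirements are incompatible, so the ``delicate point'' you deferred is not a technicality but the place where the argument breaks: along these lines the theorem is only established on the subset of~\eqref{VSI-5} where, in addition, the individual facets do not cut below the dominant faces (which always holds under~\eqref{VSI-1}, and in the scenario of Fig.~\ref{MAC-Regs}, but not near the axes of Fig.~\ref{Thresh}).
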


This means that the sum capacities of the interference-free MACs $M(\{1,2\},1)$ and $M(\{3,4\},2)$, namely $1/2 \cdot \log(1+P_1+P_2)$ each, are achievable in the IMAC.

\begin{proof}[Proof Sketch]
From the outer bound in Lemma \ref{SILemma}, we know that the sum capacity is upper bounded by $\log(1+P_1+P_2)$. Now, by using the scheme in the proof of Theorem \ref{VSI-Region}, we show that the following region is achievable
\begin{align*}
&\underline{\mathcal{C}}=\\
&\left\{\begin{array}{l}
(R_1,R_2,R_3,R_4)\in\mathbb{R}_+^4:\\
(R_1,R_2)\in\mathcal{C}^M(\{1,2\},1)\cap\mathcal{C}^M(\{1,2\},2,1+P_1+P_2)\\
(R_3,R_4)\in\mathcal{C}^M(\{3,4\},2)\cap\mathcal{C}^M(\{3,4\},1,1+P_1+P_2)\end{array}\right\}
\end{align*}
Under condition (\ref{VSI-5}), the maximum of the sum of the achievable rates in $\underline{\mathcal{C}}$ is $\log(1+P_1+P_2)$ which is equal to the upper bound.
\end{proof}

An example is shown in Figure~\ref{MAC-Regs}. Although $\mathcal{C}^M(\{3,4\},1,1+P_1+P_2)$ (dot-dashed green) is not a superset of $\mathcal{C}^M(\{3,4\},2)$ (solid blue), it still does not constrain the sum rate to a value below the sum capacity of $\mathcal{C}^M(\{3,4\},2)$. Therefore, the overall sum rate of the IMAC is not impaired by the presence of cross-links. 

We note in passing that the achievable rate region $\underline{\mathcal{C}}$ in the proof above is a cartesian product, i.e., it does not contain conditions that couple the two constituent MACs. We therefore do not expect this region to be optimal in general.

Figure \ref{Thresh} shows the channel parameter range where the \emph{capacity regions} of the interference free MACs $M(\{1,2\},1)$ and $M(\{3,4\},2)$ are achievable, as compared to the parameter range where their \emph{sum capacities} are achievable.
\begin{figure}
\centering
\psfragscanon
\psfrag{x}[]{$h_{1}^2P_1$}
\psfrag{y}[b]{$h_{2}^2P_2$}
\psfrag{B1}[]{\tiny{$P_1$}}
\psfrag{B4}[b][b][1][90]{\tiny{$P_2$}}
\psfrag{B2}[]{\tiny{$P_1(1+P_1+P_2)$}}
\psfrag{B0}[b]{\tiny{$0$}}
\psfrag{B5}[b][b][1][90]{\tiny{$P_2(1+P_1+P_2)$}}
\psfrag{B6}[b][b][1][90]{\tiny{$(P_1+P_2)(1+P_1+P_2)$}}
\psfrag{B3}[]{\tiny{$(P_1+P_2)(1+P_1+P_2)$}}
\psfrag{IVSI}[l]{\footnotesize{Individually very strong}}
\psfrag{VSCI}[l]{\footnotesize{Very strong combined}}
\psfrag{MSVS}[l]{\footnotesize{Mixed strong-extremely strong}}
\includegraphics[width=\columnwidth]{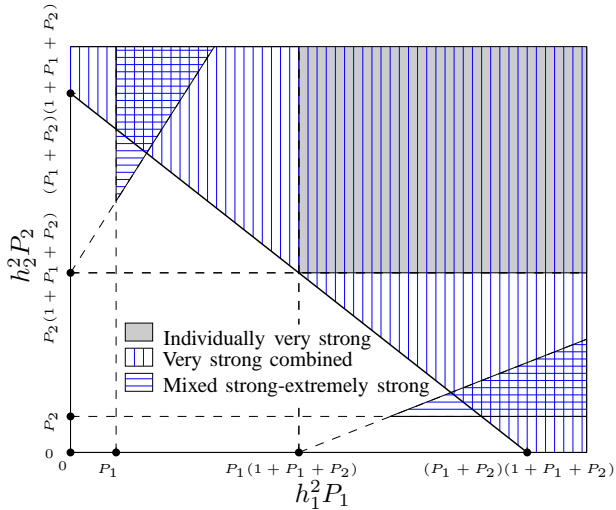}
\caption{In the parameter range within the shaded area (see \eqref{VSI-1}), the capacity region of the interference free MACs $M(\{1,2\},1)$ and $M(\{3,4\},2)$ is achievable simultaneously, while in the range within the vertically dashed area (see~\eqref{VSI-5}), this holds only for the sum capacities. The horizontally dashed area denotes the parameter range with mixed strong--extremely strong interference.}
\label{Thresh}
\end{figure}

\subsection{Sum capacity upper bound for the IMAC}
In this subsection, we provide an upper bound on the sum capacity of the symmetric IMAC. In \cite{AnnapureddyVeeravalli,ShangKramerChen,MotahariKhandani}, a genie aided technique was used to obtain a sum capacity upper bound for the IC that coincides with the simple lower bound of treating interference as noise. Thus the sum capacity of the IC in the so-called noisy interference regime was obtained. We use a similar technique to that used in \cite{AnnapureddyVeeravalli} to obtain an upper bound for the IMAC. This upper bound is stated in the following theorem.

\begin{theorem}
\label{SumCapacityUpperBound}
The sum capacity of the IMAC is upper bounded by $\overline{C}_\Sigma$, \begin{align}
C_\Sigma\leq\overline{C}_\Sigma&\triangleq\min_{\substack{\rho\in[-1,1],\\
\eta^2\leq1-\rho^2}}\log\left(1+\frac{1}{\eta^2}\left(\inr+\frac{\mathsf{A}}{1-\rho^2+\inr}\right)\right)
\end{align}
where $\inr=h_{1}^2P_1+h_{2}^2P_2$, and $$\mathsf{A}=P_1(\eta-\rho h_{1})^2+P_2(\eta-\rho h_{2})^2+P_1P_2(h_{1}-h_{2})^2.$$
\end{theorem}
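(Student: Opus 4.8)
The plan is to adapt the genie-aided converse of Annapureddy--Veeravalli \cite{AnnapureddyVeeravalli} to the IMAC. The central idea is to hand each receiver a noisy copy of the interference that its \emph{own} two transmitters inflict on the \emph{other} receiver. Concretely, I would give receiver~1 the side information $S_1 = h_1 X_1 + h_2 X_2 + \tilde Z_1$ and receiver~2 the side information $S_2 = h_1 X_3 + h_2 X_4 + \tilde Z_2$, where $(Z_1,Z_2,\tilde Z_1,\tilde Z_2)$ is jointly Gaussian, each $\tilde Z_j$ has variance $\eta^2$ and correlation $\rho$ with the co-located noise $Z_j$, and the genie noises are coupled to the noises of the far receiver through the degradedness relation $Z_2=\tilde Z_1+W$, $Z_1=\tilde Z_2+W'$ with $W,W'$ independent of everything else. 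Requiring this joint law to be a valid (positive semidefinite) covariance is exactly what forces $\eta^2\le 1-\rho^2$, which is precisely the feasibility region over which the stated minimization runs.

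By Fano's inequality and the fact that extra observations cannot hurt, each MAC sum-rate obeys $n(R_1+R_2)\le I(X_1^nX_2^n;Y_1^n,S_1^n)+n\epsilon_n$ and $n(R_3+R_4)\le I(X_3^nX_4^n;Y_2^n,S_2^n)+n\epsilon_n$. Adding these and expanding each mutual information as a difference of differential entropies yields
\begin{align*}
n C_\Sigma &\le h(Y_1^n,S_1^n)+h(Y_2^n,S_2^n)\\
&\quad - h(Y_1^n,S_1^n\mid X_1^n,X_2^n)\\
&\quad - h(Y_2^n,S_2^n\mid X_3^n,X_4^n)+2n\epsilon_n.
\end{align*}
The two subtracted terms are the delicate ones: for instance $h(Y_1^n,S_1^n\mid X_1^n,X_2^n)$ equals $h(h_1X_3^n+h_2X_4^n+Z_1^n,\tilde Z_1^n)$ and still depends on the \emph{non-Gaussian} interferers $X_3,X_4$. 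The key step is to invoke the degradedness built into the genie: because $Z_2=\tilde Z_1+W$, the signal $h_1X_1^n+h_2X_2^n+Z_2^n$ that receiver~2 sees (after conditioning on its own messages) is a physically degraded version of $S_1^n$, and symmetrically for the other pair. This lets each interference-dependent subtracted term be matched against the positive joint entropy of the opposite receiver, so that every interfering codeword is accounted for exactly once---through the receiver at which it is a \emph{desired} signal.

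With the interference terms cancelled in this way, the surviving positive entropies are ordinary joint entropies of additive-noise observations, which by the maximum-entropy principle are bounded by their Gaussian counterparts under the power constraints; a standard single-letterization, together with the worst-case-noise (saddle-point) property that the genie covariance was engineered to satisfy, then reduces the block bound to a per-symbol bound attained by i.i.d.\ Gaussian inputs. Evaluating the resulting per-receiver expression gives $I(X_1X_2;Y_1S_1)=\tfrac12\log\frac{\det\Sigma_{(Y_1,S_1)}}{\eta^2(1-\rho^2+\inr)}$, whose denominator is the determinant of the effective-noise covariance $\bigl(\begin{smallmatrix}1+\inr & \rho\eta\\ \rho\eta & \eta^2\end{smallmatrix}\bigr)$ obtained once the interference is accounted for at its Gaussian value $\inr$. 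A direct computation identifies $\det\Sigma_{(Y_1,S_1)}=(\eta^2+\inr)(1-\rho^2+\inr)+\mathsf A$, so that the two symmetric receiver bounds sum to $\log\bigl(1+\tfrac1{\eta^2}(\inr+\tfrac{\mathsf A}{1-\rho^2+\inr})\bigr)$. Finally, since the genie parameters are ours to choose, I would minimize over all feasible $(\rho,\eta)$ with $\eta^2\le 1-\rho^2$ to obtain $\overline C_\Sigma$.

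The main obstacle is the cancellation step. One cannot bound each receiver's contribution in isolation: lower-bounding the subtracted entropy by the pure-noise entropy would delete the interference power $\inr$ from the denominator, while Gaussianizing the interferer directly would push the inequality the wrong way, and in either case the bound fails to close to the stated form. The interference power must instead be tracked coherently across the two MACs, and making this rigorous---that is, proving that the single genie covariance simultaneously satisfies the \emph{useful-genie} identity enabling the cancellation and the \emph{smart-genie} condition certifying Gaussian optimality---is the technical heart of the argument.
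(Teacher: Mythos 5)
Your overall architecture is exactly the paper's: the same genie signals $S_1^n=h_1X_1^n+h_2X_2^n+\tilde Z_1^n$, $S_2^n=h_1X_3^n+h_2X_4^n+\tilde Z_2^n$, the same Fano-plus-genie decomposition, the cross-receiver matching of the interference-dependent negative entropies against the genie entropies, and Gaussian evaluation; your closing algebra is also right (indeed $\det\Sigma_{(Y_1,S_1)}=(\eta^2+\inr)(1-\rho^2+\inr)+\mathsf{A}$, so the two half-bounds sum to the stated expression). The genuine gap is in the step you yourself call the technical heart: where the constraint $\eta^2\le 1-\rho^2$ comes from. The negative term contributed by receiver 2 is $h(h_1X_1^n+h_2X_2^n+Z_2^n,\tilde Z_2^n)=h(\tilde Z_2^n)+h(h_1X_1^n+h_2X_2^n+Z_2^n\mid \tilde Z_2^n)$, so what must be matched against $h(S_1^n)$ is an entropy \emph{conditioned on} $\tilde Z_2^n$; the effective noise there is the residual $Z_2-\mathbb{E}[Z_2\mid \tilde Z_2]$, of variance $1-\rho^2$. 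Your unconditional coupling $Z_2=\tilde Z_1+W$ targets the wrong quantity, and its feasibility does not produce the claimed region: imposing the genie correlations $\mathbb{E}[Z_j\tilde Z_j]=\rho\eta$ together with your relations forces $\mathbb{E}[\tilde Z_1\tilde Z_2]=\rho\eta$, and positive semidefiniteness then requires $\rho^2\le\eta^2\le 1$, not $\eta^2\le 1-\rho^2$ (the two regions are disjoint whenever $\rho^2>1/2$). Re-coupling the noises across receivers is in itself legitimate in a converse, since only per-receiver marginals matter, so that is not the issue; the issue is that, as written, your argument asserts Gaussian optimality on pairs where the worst-case-noise lemma does not apply ($1-\rho^2<\eta^2\le 1$), and cannot even be started on part of the domain the theorem minimizes over ($\eta^2<\rho^2$).

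The repair is to state the degradedness conditionally: $\eta^2\le 1-\rho^2$ is precisely the condition under which one can realize $Z_2=(\rho/\eta)\tilde Z_2+\tilde Z_1+V$ with $V$ independent of everything and of variance $1-\rho^2-\eta^2\ge 0$, i.e., the law of $Z_2$ \emph{given} $\tilde Z_2$ is a noisier version of the genie noise $\tilde Z_1$. Then $h(h_1X_1^n+h_2X_2^n+Z_2^n\mid \tilde Z_2^n)=h(S_1^n+V^n)$, and the paired difference $h(S_1^n)-h(S_1^n+V^n)$ is exactly in the form handled by the worst-case-additive-noise/EPI lemma of \cite{AnnapureddyVeeravalli}, which shows i.i.d.\ Gaussian inputs maximize the sum of the two bounds; this is what the paper's condition ``$\eta_i^2\le 1-\rho_j^2$, $j\neq i$'' encodes, specialized to the symmetric choice $\eta_1=\eta_2$, $\rho_1=\rho_2$. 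One further small correction: you have the two genie notions reversed. Gaussian optimality of the sum is the \emph{useful-genie} property, and it is all this theorem needs; the \emph{smart-genie} condition ($I(X_1X_2;S_1\mid Y_1)=0$ under Gaussian inputs) would only be required to make the upper bound coincide with the treating-interference-as-noise lower bound, which the paper explicitly does not claim. With the conditional-degradedness fix, your proof coincides with the paper's.
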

\begin{proof}[Proof sketch]
We bound $R_1+R_2$ and $R_3+R_4$ by using a genie aided approach similar to \cite{AnnapureddyVeeravalli}. We give receiver 1 the genie signal $S_1^n=h_1X_1^n+h_2X_2^n+\eta_1W_1^n$ and receiver 2 $S_2^n=h_1X_3^n+h_2X_4^n+\eta_2W_2^n$ where $W_i\sim\mathcal{N}(0,1)$ and $\mathbb{E}[W_iZ_i]=\rho_i$, $i=1,2$. After adding the bounds, we observe that their sum is maximized by Gaussian inputs if $\eta_i^2\leq 1-\rho_j^2$, $j\neq i$. By evaluating the upper bound for Gaussian inputs, we obtain the desired expression.
\end{proof}

A sum capacity lower bound is obtained by using Gaussian codes and treating interference as noise, namely
\begin{align}
C_\Sigma\geq\underline{C}_\Sigma&=\log\left(1+\frac{P_1+P_2}{1+h_{1}^2P_1+h_{2}^2P_2}\right).
\end{align}

In Figure \ref{Figure:SumCapacity}, we plot the upper bound $\overline{C}_\Sigma$ and the lower bound $\underline{C}_\Sigma$ for an IMAC with $P_1=P_2=P$, $h_{1}=0.3$, and $h_{2}=0.15$. Notice that this upper bound is nearly tight up to some value of $P$. Intuitively, this means that below some threshold value of $\inr$, treating interference as noise achieves sum rate very close to the sum capacity $C_\Sigma$.

\begin{figure}
\centering
\psfragscanon
\psfrag{UB}[l]{$\overline{C}_\Sigma$}
\psfrag{LB}[l]{$\underline{C}_\Sigma$}
\psfrag{x}[t]{P}
\psfrag{y}[b]{Sum Rate(bits/channel use)}
\includegraphics[width=0.9\columnwidth]{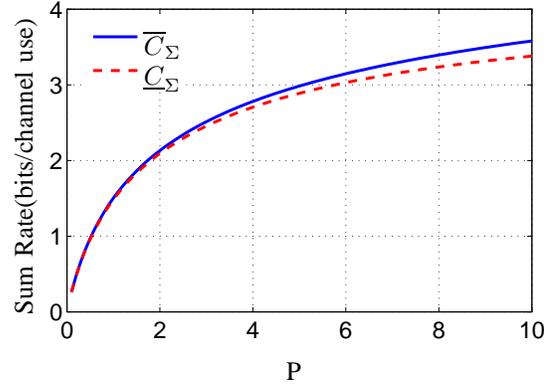}
\caption{Sum Capacity upper bound and lower bound for an IMAC with $P_1=P_2=P$, $h_{1}=0.3$, and $h_{2}=0.15$.}
\label{Figure:SumCapacity}
\end{figure}

Figure \ref{Figure:GapWeakInterference} shows the gap between $\overline{C}_\Sigma$ and $\underline{C}_\Sigma$ in bits per channel use for an IMAC with $P_1=P_2=5$ versus $h_1$ and $h_2$.

\begin{figure}
\centering
\psfragscanon
\psfrag{x}[l]{$h_1$}
\psfrag{y}[l]{$h_2$}
\psfrag{l1}[c]{\scriptsize{$<0.1$}}
\psfrag{l2}[cb]{\scriptsize{$<0.2$}}
\psfrag{l3}[cb]{\scriptsize{$<0.4$}}
\psfrag{l4}[cb]{\scriptsize{$<0.8$}}
\psfrag{l5}[c]{\scriptsize{$<1.6$}}
\includegraphics[width=0.9\columnwidth]{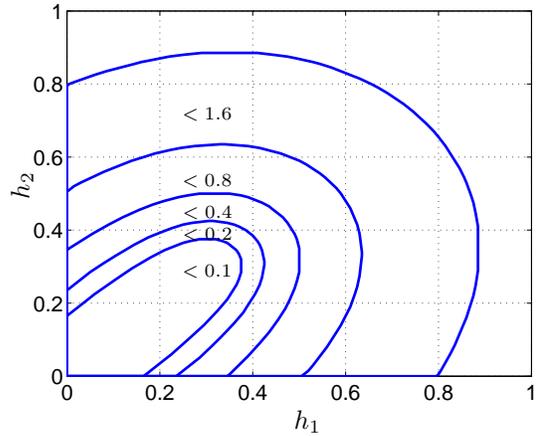}
\caption{This plot shows the gap $\overline{C}_\Sigma-\underline{C}_\Sigma$ as a function of $h_1$ and $h_2$ when $P_1=P_2=5$. Each region denotes a set of pairs $(h_1,h_2)$ where the gap is smaller than the indicated value $0.1$, $0.2$...}
\label{Figure:GapWeakInterference}
\end{figure}

\section{Conclusion}
\label{conclusion}
In this paper, progress has been made towards understanding the interfering MACs (IMAC) channel. The capacity region was characterized under various special cases, namely mixed strong--extremely strong interference, individually very strong interference, very strong combined interference. In the opposite extreme, when the interference is weak, a genie-based upper bound was obtained which is asymptotically tight and nearly tight for reasonably weak cross links.

\bibliography{myBib}

\end{document}